\newtheorem{lemma}{Lemma}
\newtheorem{definition}{Definition}
\newtheorem{example}{Example}
\begin{document}

{\bf Maintenance policy for a system with a weighted linear combination of degradation processes}

Shaomin Wu, Kent Business School, University of Kent (Corresponding author. Email: s.m.wu@kent.ac.uk) \\
Inma T. Castro, Department of Mathematics, University of Extremadura, Spain

\begin{abstract}
This paper develops maintenance policies for a system under condition monitoring. We assume that a number of defects may develop and the degradation process of each defect follows a gamma process, respectively. The system is inspected periodically and maintenance actions are performed on the defects present in the system. The effectiveness of the maintenance is assumed imperfect and it is modelled using a geometric process. By performing these maintenance actions, different costs are incurred depending on the type and the degradation levels of the defects. Furthermore, once a linear combination of the degradation processes exceeds a pre-specified threshold, the system needs a special maintenance and an extra cost is imposed. The system is renewed after several preventive maintenance activities have been performed. The main concern of this paper is to optimise the time between renewals and the number of renewals. Numerical examples are given to illustrate the results derived in the paper.

\noindent{\it Keywords}: gamma process, geometric process, preventive maintenance, condition-based maintenance, maintenance constraints
\end{abstract}

\section{Introduction}
Condition-based maintenance has been extensively studied in the reliability literature due to the emergence of advanced condition monitoring and data collection techniques. Many papers have been published to either model the degradation processes of assets \cite{Si1}, \cite{Ye}, \cite{Deng},\cite{Zhao1} or to optimise maintenance policies \cite{Caballe2},\cite{Liu},\cite{Zhao2}. To obtain a comprehensive view of the development in condition-based maintenance, the reader is referred to review papers, see \cite{Jardine},\cite{Si1},\cite{Alaswad}, for example.

A number of degradation processes have been considered in condition-based maintenance policies. Many authors investigate different maintenance policies, considering only one degradation process such as the gamma process \cite{Caballe2}, the Wiener process \cite{Sun}, the inverse Gaussian process \cite{Chen} and the Ornstein-Uhlenbeck process \cite{Deng}. Some consider condition-based maintenance policies for assets suffering a number of degradation processes. For example, \cite{Caballe1} proposes a condition-based maintenance strategy for a system subject to two 
dependent causes of failure, degradation and sudden shocks: The internal degradation is 
reflected by the presence of multiple degradation processes in the system, and degradation processes start at random times following a Non-homogeneous Poisson process and their growths are modelled by using a gamma process. \cite{Huynh} consider maintenance policies monitored by a process of the average of a number of degradation processes. 

In this paper, we consider a system on which many different types of defects develops over the time. 
If a linear combination of the degradation processes exceeds a pre-specified threshold in an inspection time, maintenance is carried out. There are many real-world examples behaving like that in the real world. For example, in the civil engineering, several different types of defects, such as fatigue cracking and pavement deformation, may develop simultaneously on a pavement network. The mechanism of these defects may be different: fatigue cracking is caused by the failure of the surface layer or base due to repeated traffic loading (fatigue), and pavement deformation is the result of weakness in one or more layers of the pavement that has experienced movement after construction \cite{Adlinge}. As such, the deteriorating processes of these defects are different in the sense that the parameters in the degradation processes may differ. Furthermore, both the approaches to repairing these defects and the cost of repairing them differ from defect to defect. The time to repair such a system may be the time when a linear combination of those defects exceeds a pre-specified threshold. In the civil engineering literature, for example, \cite{Shah} propose a linear combination of defects of pavement condition indexes and suggest that a pavement needs maintenance once its combined condition index exceeds a pre-specified threshold. It should also be noted that such deterioration might cause partial loss of system functionality. As such, there is no need to overhaul or renew the entire system unless its combined index exceeds a threshold that is large enough. 

Inspired from the above real world example, this paper develops maintenance policies for a system with many degradation processes. The system is inspected periodically. Following an inspection, an imperfect repair is performed. These imperfect repairs are modelled using geometric process. Costs of repairing different defects are different and, if the linear combination of the magnitudes of a set of defects exceeds a pre-specified threshold, an additional cost is incurred. A replacement is carried out once the number of inspections exceeds an optimum value.

The remainder of the paper is structured as follows. Section 2 introduces the notations and assumptions that will be used in the paper. Section 3 derives distributions of hitting time and considers random effects. Section 4 derives maintenance policies and proposes methods of optimisation. Section 5 illustrates the maintenance policies with numerical examples. {\color{black} Section 6 offers discussion on some of the assumptions in this paper.}  Section 7 concludes the paper.


	


\section{Assumptions}
\label{Assumptions}
This paper makes the following assumptions.
\begin{enumerate}[label=A\arabic*).]
	\item  Defects of $n$ types  develop
	 through $n$ degradation processes on a system, respectively. 
	\item The system is inspected every $T$ time units ($T>0$). 
	\item {\color{black} The system is new at time $t=0$.} 	
	\item Two types of maintenance are taken: an imperfect maintenance and a complete replacement of the system. The imperfect maintenance restores the system to a state between a good-as-new state (which is resulted from a replacement) and a bad-as-old state (which is resulted from a minimal repair) and is modelled using a geometric process. The replacement completely renews the system.
	\item On performing these maintenance actions, a sequence of costs is incurred. Repairing the $k$-th ($k=1, 2, \ldots, n$) defect incurs two types of cost: a fixed cost, and a variable cost that depends on the degradation level of the $k$-th defect. Furthermore, if the linear combination of the magnitudes of a set of defects exceeds a pre-specified threshold, an additional cost is incurred.
	\item Imperfect maintenance actions are performed every $T$ time units and preventive replacement is performed at the time of the $N$-th inspection.
	\item Maintenance time is so short that it can be neglected.
\end{enumerate}

\section{Model development}
\cite{Noortwijk} optimise inspection decisions for scour holes, on the basis of the uncertainties in the process of occurrence of scour holes and, given that a scour hole has occurred, of the process of current-induced scour erosion. The stochastic processes of scour-hole initiation and scour-hole development was regarded as a Poisson process and a gamma process, respectively. \cite{Lawless} construct a tractable gamma-process model incorporating a random effect and fit the model to some data on crack growth. In the following, we make similar assumptions: The stochastic processes of defect initiation and defect development was regarded as a Poisson process and a gamma process, respectively.

\subsection{Modelling the occurrences of the defects}
{\color{black} Denote the successive times between occurrences of the defects by the infinite sequence of non-negative real-valued random quantities $T_1,T_2,...$. }
Assume the defect initiation follows a homogeneous Poisson process. Similar to the assumptions made in \cite{Noortwijk}, we assume the defect inter-occurrence times are {\it exchangeable} and they exhibit the {\it memorylessness} property. That is, the order in which the defects occur is irrelevant and the probability distribution of the remaining time until the occurrence of the first defect does not depend on the fact
that a defect has not yet occurred since the last replacement.
{\color{black}
 According to \cite{Noortwijk}, the joint probability density function of $T_1,T_2,....,T_n$ is given by
\begin{equation}
p_{T_1,T_2,...,T_n}(t_1,...,t_n )=\int_0^{\infty} \prod_{k=1}^n \frac{1}{\lambda} \exp \left(-\frac{t_k}{\lambda} \right)  p_{\Lambda}(\lambda) d \lambda,
\end{equation}
where $(t_1,t_2,....,t_n) \in \mathbb{R}^n_+ $, $p_{\Lambda}(\lambda)=\frac{1}{\Gamma(\nu)}\mu^{\nu} \lambda^{-(\nu+1)}e^{-\nu/\lambda}1_{\{\lambda>0\}}$, where $\mu$ and $\nu$ are parameters that can be estimated from given observations, $1_{\{\lambda>0\}}=1$ if $\lambda>0$ and $1_{\{\lambda>0\}}=0$ otherwise.} With the constraint $T_1,T_2,....,T_n <T$, we assume that the $n$ defects  occur during time interval $(0,T)$. For those defects occurring within other time intervals $(kT,(k+1)T)$ (for $k$=1,2,...,), a similar joint probability density function can be derived.

\subsection{Degradation processes}
\label{Linear_Combination}
We consider the situation where $n$ types of defects may develop and their degradation processes $\{X_k(t), t \ge 0\}$ for $k=1,2,...,n$, respectively. That is, $X_k(t)$ is the deterioration level of the $k$th degradation process at time $t$ and \{$X_k(t)$, $k$=1,...,$n$\} are independent. 

Assume that $X_k(t)$ has the following properties:
\begin{enumerate}[a)]
\item $X_k(0)=0$,
\item the increments $\Delta X_k(t)=X_k(t+\Delta t) - X_k(t)$ are independent of $t$,
\item $\Delta X_k(t)$ follows a gamma distribution {\color{black} Gamma$(\alpha_k(t+\Delta t) -\alpha_k(t),\beta_k)$ with shape parameter $\alpha_k(t+\Delta t) -\alpha_k(t)$ and scale parameter $\beta_k$, where $\alpha_k(t)$ is a given monotone increasing function in $t$ and $\alpha_k(0)=0$}. 
\end{enumerate}
{\color{black} $X_k(t)$ has probability distribution Gamma$(\alpha_k(t),\beta_k)$ with mean $\beta_k \alpha_k(t)$ and variance $\beta_k^2 \alpha_k(t)$,} and its probability density function is given by
\begin{equation} \label{gammapdf} 
	f(x;\alpha_k(t), \beta_k)=\frac{\beta_k^{-\alpha_k(t)}}{\Gamma(\alpha_k (t))}x^{\alpha_k(t)-1}e^{-x/\beta_k }1_{\{x>0\}},
\end{equation}
where $\Gamma(.)$ is the gamma function. 

Suppose that the system needs maintenance as long as a linear combination of the magnitudes of the $n$ defects exceeds a pre-specified threshold. In reality, for example, a section of a pavement network may have more than $n$ defects, some of which may be of the same type. The pavement needs maintenance as long as the linear combination exceeds a pre-specified threshold.

We consider that the {\it overall degradation} of the system is represented by 
 \begin{equation}    \label{Y}
Y(t)=\sum_{k=1}^n b_k X_k(t), \quad t \geq 0, \quad b_k \geq 0, 
 \end{equation}  
{\color{black} where $b_k$ (with $b_k >0$) is the weight of defect $k$. Denote $Y_k(t)=b_k X_k(t)$. Then $Y(t)=\sum_{k=1}^n Y_k(t)$ and $Y_k(t)$ has pdf $f(x;\alpha_k(t), b_k\beta_k)$.}

Then the expected value and the variance of $Y(t)$ are given by
 \begin{equation}   
\mathbb{E}(Y(t))=\sum_{k=1}^n b_k \beta_k \alpha_k(t),
\label{Y_Mean}
 \end{equation} and
 \begin{equation}   
\textrm{var}(Y(t))=\sum_{k=1}^n b^2_k \beta_k^2 \alpha_k(t), 
\label{Y_Variance}
 \end{equation}  
respectively. 

Furthermore, the overall degradation process $\{Y(t), t \geq 0\}$, given by Eq. \eqref{Y} is a stochastic process with the following properties.
\begin{enumerate}[a)]
\item $Y(0)=\sum_{k=1}^n b_k X_k(0)=0$,
\item If the increment $\Delta X_k(t)=X_k(t+\Delta t) - X_k(t)$ is independent of $t$, then $\Delta Y(t)=\sum_{k=1}^n b_k \Delta X_k(t)$ is independent of $t$ as well,
\end{enumerate} 
According to \cite{Moschopoulos}, the density function of $Y(t)$ can be expressed by
\begin{equation}   
	g_{Y(t)}(y)=D(t) \sum_{k=0}^{\infty}\frac{\zeta_{k}(t) \beta_0^{-\rho(t)-k}}{\Gamma(\rho(t) + k)} y^{\rho(t) +k -1} e^{-y/\beta_0}, \quad y > 0, 
	\label{Y_distribution} 
\end{equation}  
where $\beta_0=\min_{1 \le k \le n} b_k \beta_k$. $D(t)$ and $\rho(t)$ are given by 
 \begin{equation} \label{beta0D0}
D(t)=\prod_{k=1}^n ( \frac{\beta_0}{b_k \beta_k })^{\alpha_k(t)},
\end{equation}   
and
 \begin{equation} \label{beta0D}
\rho(t)=\sum_{k=1}^{n} \alpha_k(t), \quad t \geq 0, 
\end{equation}   
respectively, and $\zeta_{k+1}(t)$ (for $k=0, 1, 2, \ldots$) is obtained in a recursive way as
$$\zeta_{k+1}(t)=\frac{1}{k+1} \sum_{j=1}^{k} j \eta_j(t) \zeta_{k+1-j}(t), $$
with $\zeta_0(t)=1$ and $\eta_k(t)$ is given by
$$\eta_k(t)=\sum_{j=1}^n \alpha_j(t)(1-\frac{\beta_0}{b_k \beta_k})^k /k. $$
In the particular case that $b_k \beta_k= b \beta$ for all $k$, then $Y(t) \sim \textrm{Gamma}(\sum_{k=1}^n \alpha_k(t), b\beta)$. That is, if $b_k \beta_k= b \beta$ for all $k$, $\{Y(t), t \geq 0\}$ is a gamma process.
\begin{example} We consider a system subject to three degradation processes $\left\{X_1(t), t \geq 0\right\}$, $\left\{X_2(t), t \geq 0\right\}$ and $\left\{X_3(t), t \geq 0\right\}$, respectively. These degradation processes start at random times according to a homogeneous Poisson process with parameter $\lambda=1$. The degradation processes develop according to non-homogeneous gamma process with parameters $\alpha_1=1.1$, $\beta_1=1.1$, $\alpha_2=1.2$, $\beta_2=1.2$ and $\alpha_3=1.3$, $\beta_3=1.3$. Figure \ref{complexgamma} shows these degradation processes and the process $Y(t)=\sum_{j=1}^{3} b_j X_j(t)$ with $b_1=1$, $b_2=0.8$, and $b_3=0.9$. 
	
\begin{figure}
	\begin{center}
		\includegraphics[width=0.45\textwidth]{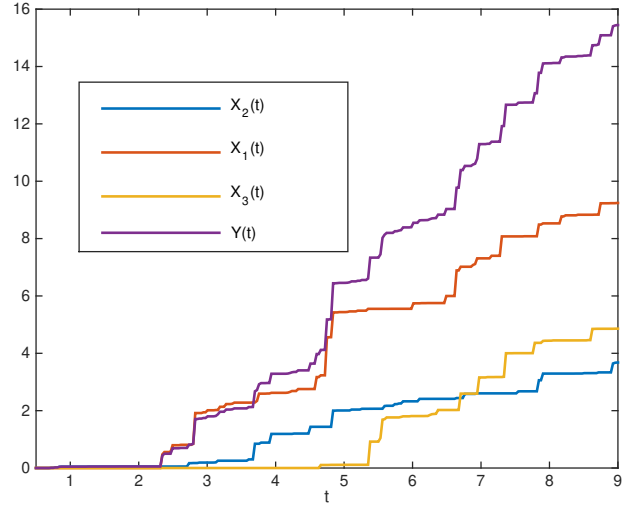}
		\caption{Degradation processes and a linear combination} \label{complexgamma}
	\end{center}
\end{figure}
\end{example}
\subsubsection{First hitting time} To characterise the maintenance scheme of this system, the distribution of the hitting times of the process $\{Y(t), t \geq 0\}$ is obtained. Starting from $Y(0)=0$ and for a fixed degradation level $L$, the first hitting time $\sigma_L$ is defined as the amount of time required for the process $\{Y(t), t \geq 0\}$ to reach the degradation level $L$, that is, 
$$\sigma_{L}= \textrm{inf}(t > 0: Y(t) \ge L). $$ 
The distribution of $\sigma_{L}$ is obtained as 
\begin{align}   
F_{\sigma_L}(t) &= P(Y(t) \geq L) \nonumber \\ 
&= \int_{L}^{\infty} g_{Y(t)}(y)dy \nonumber \\
&= \int_{L}^{\infty} D(t) \sum_{k=0}^{\infty}\frac{\zeta_{k}(t) \beta_0^{-\rho(t)-k}}{\Gamma(\rho(t) + k)} y^{\rho(t) +k -1} e^{-y/\beta_0} dy \nonumber \\ 
&= D(t) \left( \sum_{k=0}^{\infty}\frac{\zeta_{k}(t) \beta_0^{-\rho(t)-k}}{\Gamma(\rho(t) + k)} \int_{L}^{\infty} y^{\rho(t) +k -1} e^{-y/\beta_0} dy \right) \nonumber \\ \label{distributionhittingtime}
&= D(t) \sum_{k=0}^{\infty}\frac{\zeta_{k}(t)}{\Gamma(\rho(t)+ k)} \Gamma_{\mathrm{ui}}(\rho(t)+k, L/\beta_0),
\end{align}   where $\Gamma_{\mathrm{ui}}(\rho(t)+k, L/ \beta_0)$ denotes the upper incomplete gamma function, which is given by
$$\Gamma_{\mathrm{ui}}(\rho(t)+k, L/ \beta_0)=\int_{L/ \beta_0}^{\infty} z^{\rho(t)+k-1}e^{-z}dz. $$
We can link the probability distribution $F_{\sigma_L}(t)$ with the probability distribution of the hitting times for the processes $X_k(t) (k=1=2,...,n$  that compose $Y(t)$. That is, $F_{\sigma_L}(t)$ can be expressed by
$$F_{\sigma_L}(t)=D(t) \sum_{k=0}^{\infty}\zeta_{k}(t)F_{\sigma_{L,k}^*}(t), \quad t \geq 0, $$
where $F_{\sigma_{L,k}^*}(t)$ denotes the distribution of the first hitting time to exceed $L$ for a gamma process with parameters $\rho(t)+k$ and $\beta_0$, where $\rho(t)$ is given by (\ref{beta0D}) and $\beta_0=\min_{1 \leq k \leq n} b_k \beta_k$. 
\subsection{The process of repair cost}
{\color{black} Cost of repairing different effects, such as fatigue cracking and pavement deformation in a pavement network, may be different.} Denote $c_{k,y}$ as the cost of repairing the $k$th defect with deterioration level $y$. We assume in this section that this cost is proportional to the deterioration level, that is, $c_{k,y}=yc_k$, {\color{black} where $c_k$ is the cost of repairing the $k$th defect per unit deterioration level.} We define $U(t)=\sum_{k=1}^n c_k X_k(t)$ as the total repair cost at time $t$. Then $\{U(t), t \ge 0\}$ is the cost growth process and $c_k X_k(t)$ has pdf $f(x;\alpha_k(t), c_k \beta_k)$. The expected value and the variance of $U(t)$ can be obtained by replacing $b_k$ with $c_kb_k$ in Eq. \eqref{Y_Mean} and Eq. \eqref{Y_Variance}, respectively. The pdf of $U(t)=\sum_{k=1}^n c_k X_k(t)$ can be obtained via replacing $b_k$ with $c_k b_k$ in the pdf of $Y$ in Eq. \eqref{Y_distribution}. 

The covariance between $Y(t)$ and $U(t)$ is given by
 \begin{equation}   
\textrm{Cov}(Y(t),U(t))=\sum_{k=1}^n \sum_{j = 1}^n b_k c_{j} \textrm{cov}(X_k(t), X_j(t)).
 \end{equation}   Since $X_k(t)$ for $k=1,2,...$ are independent, $\textrm{cov}(X_k(t), X_j(t))=0$ for $k \neq j$, then
 \begin{equation}   
\textrm{Cov}(Y(t),U(t))= \sum_{k=1}^n b_k c_k \alpha_k(t)  \beta_k^2(t).
 \end{equation}   
In most existing research {\color{black} on maintenance cost of one degradation process $Y(t)$}, once the magnitude of the degradation $Y(t)$ is given, the associated cost of repair may be $c_r Y(t)$, which is proportional to $Y(t)$ (where $c_r$ denotes the cost of repairing a unit of $Y(t)$). {\color{black} However, in our setting, $U(t)$ forms a stochastic process that is not proportional to $Y(t)$. This is because there are many different combinations of $Y_k(t)$ that can be summed up to obtain the same value of $Y(t)$. Correspondingly, the different $Y_k(t)$'s incur different repair cost $c_{k,y}$. As such, for a $Y(t)$ at a given time point $t$, its associated repair cost $U(t)$ is a random variable that does not have a linear correlationship with $Y(t)$.}

The next result gives the distribution of the cost of repair, conditioning that the linear combination of the degradation processes exceeds the pre-specified value $L$.
\begin{lemma} \label{lemacondicion} The conditional probability $f_{U(t)|Y(t)}(y,z)$ is given by
	\begin{align}   
	f_{U(t)|Y(t)}(y,u)
	&=\frac{1}{4\pi^2 g_{Y(t)}(y)} \int_{-\infty}^{\infty}
	\int_{-\infty}^{\infty} \left( \prod_{k=1}^n (1-i(b_k t_1 +c_k t_2)  \beta_k)^{-\alpha_k(t)} \right) e^{-it_1 y-it_2 u} \d t_1\d t_2.
	\end{align}   	\label{ConditionalProbability}
\end{lemma}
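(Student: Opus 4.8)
The plan is to read the asserted identity as the statement that the joint density of $(Y(t),U(t))$ equals a two-dimensional Fourier inversion integral, after which the conditional density follows by dividing out the marginal. Indeed, by definition $f_{U(t)|Y(t)}(y,u)=f_{Y(t),U(t)}(y,u)/g_{Y(t)}(y)$, and the marginal $g_{Y(t)}(y)$ is already available from \eqref{Y_distribution}; hence it suffices to show that
$$f_{Y(t),U(t)}(y,u)=\frac{1}{4\pi^2}\int_{-\infty}^{\infty}\int_{-\infty}^{\infty}\phi(t_1,t_2)\,e^{-it_1 y-it_2 u}\,dt_1\,dt_2,$$
where $\phi(t_1,t_2)=\mathbb{E}\big[e^{it_1 Y(t)+it_2 U(t)}\big]$ is the joint characteristic function and equals the product displayed in the lemma.

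First I would evaluate $\phi$. Substituting $Y(t)=\sum_{k=1}^n b_k X_k(t)$ and $U(t)=\sum_{k=1}^n c_k X_k(t)$ rewrites the exponent as $\sum_{k=1}^n (b_k t_1+c_k t_2)X_k(t)$, and since the processes $\{X_k(t)\}$ are independent the expectation factorizes as $\phi(t_1,t_2)=\prod_{k=1}^n\mathbb{E}\big[e^{i(b_k t_1+c_k t_2)X_k(t)}\big]$. Each factor is the characteristic function of a $\textrm{Gamma}(\alpha_k(t),\beta_k)$ variable evaluated at $b_k t_1+c_k t_2$, namely $(1-i(b_k t_1+c_k t_2)\beta_k)^{-\alpha_k(t)}$; inserting this standard identity reproduces exactly the product under the integral sign. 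Applying the two-dimensional Fourier inversion theorem then recovers $f_{Y(t),U(t)}$, and dividing by $g_{Y(t)}(y)$ gives the claim.

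The step that requires genuine care — and which I expect to be the main obstacle — is justifying the Fourier inversion, i.e.\ verifying that $\phi$ is integrable over $\mathbb{R}^2$ so that the inversion formula applies. Along a generic ray $|(t_1,t_2)|\to\infty$ the modulus of the integrand decays like $|(t_1,t_2)|^{-\rho(t)}$ with $\rho(t)=\sum_{k=1}^n\alpha_k(t)$, which is integrable in the plane only when $\rho(t)>2$; moreover, along each exceptional line $b_k t_1+c_k t_2=0$ the $k$-th factor stops decaying, lowering the effective exponent to $\rho(t)-\alpha_k(t)$, so one must separately confirm adequate decay there (or treat these measure-zero directions via a limiting argument). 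Once this integrability is established, the inversion and the subsequent division by the marginal are entirely routine, and the stated formula follows.
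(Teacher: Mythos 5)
Your proposal follows exactly the paper's own route: compute the joint characteristic function $\phi_{Y(t),U(t)}(t_1,t_2)$ by substituting the linear combinations, factorize it over $k$ using independence of the $X_k(t)$, insert the gamma characteristic function, apply two-dimensional Fourier inversion to recover $f_{Y(t),U(t)}$, and divide by $g_{Y(t)}(y)$. Your closing remark on integrability of $\phi$ (decay of order $|(t_1,t_2)|^{-\rho(t)}$ and the exceptional lines $b_k t_1 + c_k t_2 = 0$) is a point of rigor the paper's proof silently skips, so you have if anything been more careful than the original.
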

\begin{proof} The characteristic function of the bivariate vector $(Y(t),U(t))$ is derived by
	\begin{align}   
	\phi_{Y(t),U(t)}(t_1,t_2)&= \mathbb{E}[\exp(i t_1 Y(t) + i t_2 U(t))] \nonumber\\
	&= \mathbb{E}[\exp(i t_1 \sum_{k=1}^n b_k X_k(t) + i t_2 \sum_{k=1}^n c_k X_k(t))] \nonumber\\
	&= \mathbb{E}[\exp(i \sum_{k=1}^n (b_k t_1 + c_k  t_2) X_k(t)] \nonumber\\
	&= \prod_{k=1}^n \mathbb{E}[\exp(i (b_k t_1 + c_k  t_2) X_k(t)] \nonumber\\
	&=\prod_{k=1}^n \phi_{X_k(t)}(b_k t_1 + c_k t_2).
	\end{align}   Since $\phi_{X_k(t)}(b_k t_1 +c_k  t_2)=(1-i(b_k t_1 +c_k t_2) \beta_k^{-1})^{-\alpha_k(t)}$, we can obtain
	\begin{align}   
	f_{Y(t),U(t)}(y,u)&=\frac{1}{4\pi^2} \int_{-\infty}^{\infty} 
	\int_{-\infty}^{\infty} \phi_{Y(t),U(t)}(t_1,t_2) e^{-it_1 y-it_2 u} \d t_1 \d t_2 \nonumber\\
	&=\frac{1}{4\pi^2} \int_{-\infty}^{\infty}
	\int_{-\infty}^{\infty} \left( \prod_{k=1}^n \phi_{X_k(t)}(b_k t_1 + c_k t_2) \right) e^{-it_1 y-it_2 u} \d t_1 \d t_2 \nonumber\\
	&=\frac{1}{4\pi^2} \int_{-\infty}^{\infty}
	\int_{-\infty}^{\infty} \left(\prod_{k=1}^n (1-i( b_k t_1 +c_k t_2)  \beta_k^{-1})^{-\alpha_k(t)}\right) e^{-i t_1 y-it_2 u} \d t_1 \d t_2.
	\end{align}   Hence, the conditional probability $f_{U(t)|Y(t)}(y,u)$ is given by
	\begin{align}   
	f_{U(t)|Y(t)}(y,u) &= \frac{f_{Y(t),U(t)}(y,u)}{g_{Y(t)}(y)}\nonumber \\
	&=\frac{1}{4\pi^2 g_{Y(t)}(y)} \int_{-\infty}^{\infty}
	\int_{-\infty}^{\infty} \left( \prod_{k=1}^n (1-i(b_k t_1 +c_k t_2) \beta_k)^{-\alpha_k(t)} \right) e^{-it_1 y-it_2 u} \d t_1 \d t_2,
	\end{align}   where $g_{Y(t)}(y)$ is given by (\ref{Y_distribution}).
	This establishes Lemma \ref{ConditionalProbability}. \hfill{$\blacksquare$}
\end{proof}


\subsection{Incorporating random effect}
{\color{black} It is known that random environment may affect the degradation processes of a system.} For example, the deterioration processes of the defects on a pavement network may be affected by covariates such as the weather condition (the amount of rainfall) and traffic loading. If it is possible to collect weather condition data (eg., the amount of rainfall in a time period) and traffic loading data, one may incorporate co-variates in modelling. In addition, we may also consider random effects to account for possible model misspecification and individual unit variability.

\cite{Bagdonavicius} and \cite{Lawless} consider covariates in a gamma process. When incorporating covariates, represented by vector $z$, for example, \cite{Bagdonavicius} incorporate $\alpha_k(t)$ with $\alpha_k(te^{z^{\tau} \delta})$ (where $z^{\tau}$ is the transpose of $z$), \cite{Lawless} replace $\beta_k$ with $\beta_k(z)$, in which $z$ represents covariates and has the effect of rescaling X(t) without changing the shape parameter of its gamma distribution. $\beta_k(z)$ may have a regression function expression such as $\beta_k(z)=\exp(\boldsymbol{\beta' z})$, where $\boldsymbol{\beta'}$ and $\boldsymbol{z}$ are vectors of covariates and regression coefficients, respectively. In the following, we adopt the latter method and assume a degradation process $\{X'_k(t), t \ge 0\}$, which takes both covariates and random effects into consideration. Then, $X'_k(t)$ has density function $f_{\gamma}(x';\alpha_k(t),w_0\beta_{z,k})$, where $w_0$ is a random effect and $\beta_{z,k}$ represents $\beta_k(z)$. One may assume that $w=w_0^{-1}$ has gamma distribution Gamma$(\gamma^{-1},\delta)$ and density function $g_{\gamma^{-1},\delta}(w)=\frac{\gamma^{\delta}}{\Gamma(\delta)} w^{\delta-1}e^{-\gamma w}$; $w$ has mean $\frac{\delta}{\gamma}$ and variance $\sigma^2_z=\frac{\delta}{\gamma^2}$. If $(X'_1, X'_2,\dots,X'_n,w_0)$ has joint density $h(x_1, x_2,\dots,x_n,w)$, then the conditional density of $X'_1, X'_2,\dots,X'_n$ given $w_0=w$, is
 \begin{equation}   
h_0(x_1, x_2,\dots,x_n|w)=
 \frac{h(x_1, x_2,\dots,x_n,w)}{g_{\gamma^{-1},\delta}(w)}.
 \end{equation}   For given weather conditions and traffic loading, one can regard $X_1, X_2,...,X_n$ as independent. That is, $X_1', X_2',...,X_n'$ are conditionally independent given a third event. Then,
\begin{align}   
h(x_1, x_2,\dots,x_n,w)=&h_0(x_1, x_2,\dots,x_n|w)g_{\gamma^{-1},\delta}(w) \nonumber \\
=& g_{\gamma^{-1},\delta}(w) \prod_{k=1}^n h_k(x_k|w). 
\end{align}   Since $h_k(x_k|w)=\frac{(\frac{\beta_{z,k}}{w})^{-\alpha_k(t)}}{\Gamma(\alpha_k (t))}x_k^{\alpha_k(t)-1}e^{- w \frac{x_k}{\beta_{z,k}}}
$, if $(X'_1, X'_2,\dots,X'_n)$ has joint density function $f_0(x_1, x_2,\dots,x_n)$, then
\begin{align}   
f_0(x_1, x_2,\dots,x_n)&=\int_0^{+\infty} g_{\gamma^{-1},\delta}(w) \prod_{k=1}^n h_k(x_k|w) \d w \nonumber \\
&=\int_0^{+\infty} \left(\prod_{k=1}^{n} \frac{(\frac{\beta_{z,k}}{w})^{-\alpha_k(t)}}{\Gamma(\alpha_k (t))}x_k^{\alpha_k(t)-1}\right)e^{-w\sum_{k=1}^n \frac{x_k}{\beta_{z,k}}} g_{\gamma^{-1},\delta}(w) \d w \nonumber \\
&= \prod_{k=1}^{n} \frac{\beta_{z,k}^{-\alpha_k(t)}}{\Gamma(\alpha_k(t))}x_k^{\alpha_k(t)-1}\int_{0}^{\infty} \frac{\gamma^{\delta}}{\Gamma(\delta)}w^{\delta+\rho(t)-1}\exp \left\{-w\left(\gamma+\sum_{k=1}^{n}\frac{x_k}{\beta_{z,k}}\right)\right\} \d w \\
&= \int_{0}^{\infty} \frac{\gamma^{\delta}}{\Gamma(\delta)}w^{\delta+\rho(t)-1}exp \left\{-w\left(\gamma+\sum_{k=1}^{n}\frac{x_k}{\beta_{z,k}}\right)\right\} \d w \left(\prod_{k=1}^{n} \frac{\beta_{z,k}^{-\alpha_k(t)}}{\Gamma(\alpha_k(t))}x_k^{\alpha_k(t)-1}\right) \\
&=\frac{\gamma^{\delta}}{\Gamma(\delta)}\left(\gamma+\sum_{k=1}^{n} \frac{x_k}{\beta_{z,k}}\right)^{\delta+\rho(t)} \Gamma(\delta+\rho(t))\left(\prod_{k=1}^{n} \frac{\beta_{z,k}^{-\alpha_k(t)}}{\Gamma(\alpha_k(t))}x_k^{\alpha_k(t)-1}\right). 
\label{Joint_Dis}
\end{align}   where $\rho(t)=\sum_{k=1}^{n} \alpha_k(t)$. 
\subsubsection{First hitting time}
Next, we compute the first hitting time of the process $\left\{Y(t), t \geq 0\right\}$ to exceed a degradation level $L$. 
Let $$\sigma_L= \textrm{inf}(t > 0: Y(t) \ge L).$$ Then the probability distribution of $F_{\sigma_L}$ is given by
\begin{align}   
F_{\sigma_L}(t)=& \sum_{k=1}^n b_k X_k(t) \geq L
\int_0^{+\infty} g_{\gamma^{-1},\delta}(w) \prod_{k=1}^n h_k(x_k|w) d w d x_1 ... d x_n \nonumber \\ 
=&\sum_{k=1}^n b_k X_k(t) \geq L
\int_0^{+\infty} \left(\prod_{k=1}^{n} \frac{(\frac{\beta_{z,k}}{w})^{-\alpha_k(t)}}{\Gamma(\alpha_k (t))}x_k^{\alpha_k(t)-1}\right)e^{-w\sum_{k=1}^n \frac{x_k}{\beta_{z,k}}} g_{\gamma^{-1},\delta}(w) d w d x_1 ... d x_n \nonumber \\
=&\int_0^{+\infty}\left(\sum_{k=1}^n b_k X_k(t)\geq L \left(\prod_{k=1}^{n} \frac{(\frac{\beta_{z,k}}{w})^{-\alpha_k(t)}}{\Gamma(\alpha_k (t))}x_k^{\alpha_k(t)-1}\right)e^{-w\sum_{k=1}^n \frac{x_k}{\beta_{z,k}}} d x_1 ... d x_n \right) g_{\gamma^{-1},\delta}(w) d w. 
\end{align}   
According to \cite{Moschopoulos}, we have $$\sum_{k=1}^n b_k X_k(t)<L \left(\prod_{k=1}^{n} \frac{(\frac{\beta_{z,k}}{w})^{-\alpha_k(t)}}{\Gamma(\alpha_k (t))}x_k^{\alpha_k(t)-1}\right)e^{-w\sum_{k=1}^n \frac{x_k}{\beta_{z,k}}} g_{\gamma^{-1},\delta}(w) d w d x_1 ... d x_n
=\int_{L}^{\infty} g_{Y(t)}'(x) \d x, $$
where $g_{Y(t)}'(x)$ is obtained following the same reasoning as in (\ref{Y_distribution}), that is, 
$$g_{Y(t)}'(x)=D_z(t) \sum_{k=0}^{\infty}\frac{\zeta_{z,k}(t) (\beta_{z,0}/w)^{-\rho_z(t)-k}}{\Gamma(\rho_z(t) + k)} x^{\rho_z(t) +k -1} e^{-w x/\beta_{z,0}}, $$ and $D_z(t)$, $\zeta_{z,k}(t)$, $\rho_z(t)$ and $\beta_{z,0}$ are obtained by replacing $\beta_k$ with $\beta_{z,k}$ in the definitions of $D(t)$, $\zeta_k(t)$, $\rho(t)$ and $\beta_0$, respectively. 

Finally, we obtain
\begin{align}   
P(Y(t) \geq L)
=&\int_0^{\infty} \int_{L}^{\infty} g_{Y(t)}'(y) g_{\gamma^{-1},\delta}(w) \d y \d w 
 \nonumber \\
=& D_z(t) \gamma^{\delta}\sum_{k=0}^{\infty}\frac{\zeta_{z,k}(t) \beta_{z,0}^{-\rho_z(t)-k}}{B(\rho_z(t)+k,\delta)} \int_{L}^{\infty} x^{\rho_z(t) +k -1} \left(\gamma+\frac{x}{\beta_{z,0}}\right)^{\rho_z(t)+k+ \delta}\d x \nonumber \\
 =&D_z(t) (\beta_{z,0} \gamma)^{\delta}\sum_{k=0}^{\infty}\frac{\zeta_{z,k}(t)}{B(\rho_z(t)+k,\delta)} \int_{L}^{\infty} x^{\rho_z(t) +k -1} \left(\beta_{z,0}\gamma+x\right)^{\rho_z(t)+k+ \delta}\d x, 
\end{align} 
where $B(x,y)$ denotes the function beta given by 
$$B(x,y)=\frac{\Gamma(x)\Gamma(y)}{\Gamma(x+y)}.$$
\section{Maintenance Policies}
In the reliability literature, there are many models describing the effectiveness of a maintenance activity. Such models include modification of intensity models \cite{Doyen}, \cite{Wu2}, reduction of age models \cite{Kijima},\cite{Doyen},\cite{Wu2}, geometric processes \cite{Lam},\cite{Wu3},\cite{Wu1}, etc. For a system like a section of pavement, maintenance may remove all of the defects, the degradation processes of the defects may therefore stop. After maintenance, new defects may develop in a faster manner than before. The effectiveness of such maintenance may be modelled by the geometric process.

The geometric process describes a process in which the lifetime of a system becomes shorter after each maintenance. Its definition is given by \cite{Lam}, and it is shown below.
\begin{definition} \cite{Lam}
	Given a sequence of non-negative random variables $\{X_j,j=1,2, \dots\}$, if they are independent and the cdf of $X_j$ is given by $F(a^{j-1}x)$ for $j=1,2, \dots$, where $a$ is a positive constant, then $\{X_j,j=1,2,\cdots\}$ is called a geometric process (GP). 
	\label{GP}
\end{definition}
The parameter $a$ in the GP plays an important role. The lifetime described by $F(a^{j-1}x)$ with a larger $a$ is shorter than that described by $F(a^{j-1}x)$ with a smaller $a$ with $j=1, 2, \ldots$.
\begin{itemize}
	\item If $a>1$, then $\{X_j, j=1, 2, \cdots\}$ is stochastically decreasing.
	\item If $a<1$, then $\{X_j, j=1, 2, \cdots\}$ is stochastically increasing.
	\item If $a=1$, then $\{X_j, j=1, 2, \cdots\}$ is a renewal process.
	\item If $\{X_j, j=1,2,\dots\}$ is a GP and $X_1$ follows the gamma distribution, then the shape parameter of $X_j$ for $j=2,3,\dots$ remains the same as that of $X_1$ but its scale parameter changes.
\end{itemize}
GP has been used extensively in the reliability literature to implement the effect of imperfect repairs on a repairable system (see \cite{Castro}, \cite{Wang}, \cite{Wu3}, \cite{Wu1}, among others). 

In addition to the assumptions listed in Section \ref{Assumptions}, we make the following assumptions. 
\begin{enumerate}[label=A\arabic*).,start=8]
\item Immediately after a repair, the system resets {\it its} age to 0, at which there are no defects in the system. 
\item The initiation of the defects after the $j$-th imperfect repair follows a homogeneous Poisson process with parameters $\lambda/a_1(T)^{j-1}$ with $a_1(T)>0$ and $a_1(T)$ being a non-decreasing function in $T$ for $j=1, 2, \ldots$. 
\item After the $j$-th repair and after the arrival of the $k$-th defect, the $k$-th defect grows according to a gamma process with shape parameter $\alpha_k(t)$ and scale parameter $a_2(T)^{j-1} \beta_k$ with 
$a_2(T)(>0)$ being $a_2(T)$ an increasing function in $T$ for $j=1, 2, \ldots$. 
\item Each inspection implies a cost of $c_I$ monetary units, $c_{k,y}$ corresponds to the variable cost of repairing the $k$-th defect with degradation level equals to $y$ and $c_{f,k}$ corresponds to the fixed cost of repairing the $k$-th defect. Furthermore, if in an inspection time the ``overall degradation'' of the system given by (\ref{overalldegradation}) exceeds the threshold $L$, an additional cost of $c_F$ monetary units is incurred. The cost of the replacement at time $NT$ is equal to $c_R$. 
\end{enumerate}
We explain assumptions A9) and A10), respectively, in the following.
\begin{itemize}
\item Assumption A9) implies that the defect arrival rate relates to the inspection interval $T$, which reflects the case that $a_1(T)$ becomes bigger and the system tends to deteriorate  faster for large $T$ than for small $T$. 
\item Assumption A10) implies that the degradation rate increases with the number of imperfect repairs performed on the system. We denote by $\left\{Y^{*}_j(t), t \geq 0\right\}$ the ``overall'' degradation of the maintained system after the $j$-th repair, and denote
	\begin{equation}    \label{overalldegradation}
	Y^*_j(t)=\sum_{k=1}^{n}b_kX_{k,j}(t), \quad 0 \leq t \leq T, 
 \end{equation}   
where $\left\{X_{k,j}(t), t \geq 0\right\}$ stands for a gamma process with parameters $\alpha_k$ and $\beta_k a_2(T)^{j-1}$. Similar to the derivation process shown in previous section, we can compute the first hitting time to exceed the threshold $L$ for the process (\ref{overalldegradation}) following the same reasoning as in (\ref{Y_distribution}) replacing $\beta_k$ by $\beta_k a_2(T)^{j-1}$. That is, 
$$\sigma_L^{(j)}=\inf \left\{t \geq 0: Y^*_j(t) \geq L\right\}, $$
	and we denote by $F_{\sigma_L}^j$ the distribution of $\sigma_L^{(j)}$. 
\end{itemize}

The problem is to determine the time between inspections and the number of inspections that minimise an objective cost function. The optimisation problem is formulated in terms of the expected cost rate per unit time.

By a replacement cycle, we mean the time between two successive replacements of the system. In this paper, the total replacement cycle is equal to $NT$. Let $\mathrm{Q}_0(N,T)$ be the expected rate of the total cost in a replacement cycle. Then we obtain
\begin{align}    \nonumber
 \mathrm{Q}_0(N,T) = & \frac{1}{NT}\sum_{j=1}^N \left[c_I + \sum_{k=1}^n \frac{(a_1(T))^{j-1}}{\lambda}\left(c_{f,k} + \int_{0}^{\infty}c_{k,y} f(y; \alpha_k(T),\beta_k a_2(T)^{j-1}) \d y \right) \right. \nonumber \\ 
& \left. + c_F \frac{(a_1(T))^{j-1}}{\lambda} F_{\sigma_L}^{j}(T)\right]+\frac{c_R}{NT},
\label{ExpectedDegradationCostValue}
\end{align}   where $f(y; \alpha_k(T),\beta_k a_2(T)^{j-1})$ is given by (\ref{gammapdf}) and $F_{\sigma_L}^j(T)$ is given by (\ref{distributionhittingtime}) replacing $\beta_k$ by $\beta_ka_2(T)^{j-1}$. The expected variable cost per unit time in a replacement cycle is given by
 \begin{equation}    \label{variablecost}
\mathrm{CV}(N,T)=\frac{1}{NT}\sum_{j=1}^{N} \sum_{k=1}^{n}\frac{a_1(T)^{j-1}}{\lambda}\int_{0}^{\infty}c_{k,y} f(y; \alpha_k(T),\beta_k a_2(T)^{j-1}) \d y.
 \end{equation}   
The optimization problem is formulated as 
 \begin{equation}    \label{optimisationproblem}
\mathrm{Q}_0(N_{opt},T_{opt})=\min_{\substack{N=1,2,... \\ T>0}} \mathrm{Q}_0(N,T). 
 \end{equation}    

\subsection{Special cases}
In this section, we discuss $\mathrm{Q}_0(N,T)$ and $\mathrm{CV}(N,T)$ under special cases of $c_{k,y}$, $a_1(T)$, $a_2(T)$, and $\alpha_k(T)$, respectively.
\subsubsection{Special cases of $c_{k,y}$}
Different scenarios can be envisaged depending on the variable cost function  $c_{k,y}$. 
\begin{itemize}
\item If $c_{k,y}=c_k$, then the expected variable cost rate (\ref{variablecost}) in a renewal cycle is given by
 \begin{align*}
\mathrm{CV}(N,T)&=\frac{1}{NT} \sum_{j=1}^{N} \sum_{k=1}^{n}\frac{a_1(T)^{j-1}}{\lambda}\int_{0}^{\infty}c_k f(y; \alpha_k(T),\beta_k a_2(T)^{j-1}) \d y \nonumber \\
&= \frac{1}{NT} \sum_{j=1}^{N} \sum_{k=1}^{n}\frac{a_1(T)^{j-1}c_k}{\lambda}. 
\end{align*}   \item If $c_{k,y}$ is directly proportional to the degradation level of the $k$-th defect in the inspection time, that is, $c_{k,y}=y c_k$, then the expected variable cost (\ref{variablecost}) is equal to
 \begin{align*}
\mathrm{CV}(N,T) &= \frac{1}{NT} \sum_{j=1}^{N} \sum_{k=1}^{n}\frac{a_1(T)^{j-1}}{\lambda}\int_{0}^{\infty}c_k y f(y; \alpha_k(T),\beta_k a_2(T)^{j-1}) \d y \\
&= \frac{1}{NT} \sum_{j=1}^{N} \sum_{k=1}^{n}\frac{a_1(T)^{j-1}}{\lambda} c_k\alpha_k(T) a_2(T)^{j-1}\beta_k \\
&= \frac{1}{NT}\left( \frac{(a_1(T)a_2(T))^N-1}{a_1(T)a_2(T)-1} \right) \sum_{k=1}^{n} \frac{c_k\alpha_k(T)\beta_k}{\lambda}.
\end{align*}   \item If $c_{k,y}$ is directly proportional to the square of the degradation level of the $k$-th defect in the inspection time, that is, $c_{k,y}=y^2 c_k$. In this case, the repair cost may relate to the area of a defect, see \cite{Noortwijk}, for example.  (\ref{variablecost}) is given by
 \begin{align*}
\mathrm{CV}(N,T) &= \frac{1}{NT} \sum_{j=1}^{N} \sum_{k=1}^{n}\frac{a_1(T)^{j-1}}{\lambda}\int_{0}^{\infty}c_k y^2 f(y; \alpha_k(T),\beta_k a_2(T)^{j-1}) \d y \\
&= \frac{1}{NT} \sum_{j=1}^{N} \sum_{k=1}^{n}\frac{a_1(T)^{j-1}}{\lambda} c_k \left(\mathrm{Var}(X_{k,j}(T))+(\mathbb{E}(X_{k,j}(T)))^2\right) \\
&= \frac{1}{NT} \sum_{j=1}^{N} \sum_{k=1}^{n}\frac{a_1(T)^{j-1}}{\lambda} c_k \left(\beta_k^2 \alpha_k(T) a_2(T)^{2j-2} + \beta_k^2 \alpha_k(T)^2 a_2(T)^{2j-2} \right) \\
&= \frac{1}{NT} \left(\frac{(a_1(T)a_2(T)^2)^N-1}{a_1(T)a_2(T)^2-1}\right) \sum_{k=1}^{n} \frac{c_k\beta_k^2(\alpha_k(T)+\alpha_k(T)^2)}{\lambda}.
\end{align*}  
\end{itemize}

\subsubsection{Special cases of $a_1(T)$, $a_2(T)$, and $\alpha_k(T)$}
The analysis of the monotonicity of $\mathrm{Q}_0(N,T)$ is quite tricky. To analyse it, some particular conditions are imposed. We assume that $a_1(T)=a_1$, $a_2(T)=a_2$, $\alpha_k(T)=\alpha_kT$, and $c_{k,y}=yc_k$, 
$\mathrm{Q}_0(N,T)$ given by (\ref{ExpectedDegradationCostValue}) is then reduced to
\begin{align}    \label{expectedreduced}
\mathrm{Q}_0(N,T) &= \frac{c_I}{T} +\frac{c_R}{NT} +\frac{(a_1^N-1)}{(a_1-1)\lambda NT}\sum_{k=1}^{n}c_{f,k} \\
&+ \frac{(a_1^N a_2^N-1)}{\lambda N(a_1a_2-1)}\sum_{k=1}^{n}c_k\alpha_k \beta_k +\frac{c_F}{ \lambda N T} \sum_{j=1}^{N}a_1^{j-1}F_{\sigma_L}^{(j)}(T). \nonumber
\end{align}   We suppose that $N$ is constant and $T$ is variable on $(0, \infty)$. A necessary condition that a finite $T^{*}$ minimises $\mathrm{Q}_0(N,T)$ given by (\ref{expectedreduced}) is that it satisfies
$$ \sum_{j=1}^{N} a_1^{j-1}\left(f_{\sigma_L}^{(j)}(T)T-F_{\sigma_L}^{(j)}(T)\right) =\frac{\lambda}{c_F}\left(Nc_I+\frac{a_1^N-1}{\lambda(a_1-1)} \sum_{k=1}^{n} c_{f,k}+c_R\right). $$
Next, we suppose that $T$ is constant. Then a necessary condition that there exists a finite a unique $N^*$ minimizing $\mathrm{Q}_0(N,T)$ is that $N^*$ satisfies 
$$\mathrm{Q}_0(N+1,T) \geq \mathrm{Q}_0(N,T),$$
and
$$\mathrm{Q}_0(N,T) \geq \mathrm{Q}_0(N-1,T).$$
We get that
 \begin{align*}
\mathrm{Q}_0(N+1,T)-\mathrm{Q}_0(N,T) =&\frac{1}{\lambda T (a_1-1)}\sum_{k=1}^{n}c_{f,k}\frac{N(a_1^{N+1}-a_1^N)-a_1^N+1}{N(N+1)} \\ &	+\frac{\sum_{k=1}^{n}c_k\alpha_k\beta_k}{\lambda(a_1a_2-1)}\frac{N(a_1^{N+1}a_2^{N+1}-a_1^Na_2^N)-a_1^Na_2^N+1}{N(N+1)} \\
& -\frac{c_R}{N(N+1)T}+ c_F \frac{\sum_{j=1}^{N}a_1^{N}F_{\sigma_L}^{(N+1)}(T)-a_1^{j-1}F_{\sigma_L}^(j)(T)}{\lambda N(N+1)T}. 
\end{align*}   Hence, for fixed $T$, $\mathrm{Q}_0((N+1),T)-\mathrm{Q}_0(N,T) \geq 0$ if and only if
$$c_R < D(N,T), $$
where
 \begin{align*}
D(N,T) &= \frac{1}{\lambda}\sum_{k=1}^{n}c_{f,k}\frac{N(a_1^{N+1}-a_1^N)-a_1^N+1}{(a_1-1)} \\
&+ T \sum_{k=1}^{n}c_k\alpha_k\beta_k \frac{N(a_1^{N+1}a_2^{N+1}-a_1^Na_2^N)-a_1^Na_2^N+1}{\lambda(a_1a_2-1)} \\
& + \frac{c_F }{\lambda}\left(\sum_{j=1}^{N}a_1^N F_{\sigma_L}^{(N+1)}(T)-a_1^{j-1} F_{\sigma_L}^{(j)}(T)\right). 
\end{align*}   We get that, if $a_1>2$, then $D(N,T)$ is non decreasing in $N$. Therefore, if
$$ c_R< D(1,T), $$
then $ c_R<D(N,T)$ for all $N$. 
We get that
 \begin{align*}
D(1,T) &= \frac{(a_1-1)}{\lambda}\sum_{k=1}^{n} c_{f,k}+\frac{T(a_1a_2-1)}{\lambda} \sum_{k=1}^{n}c_k\alpha_k\beta_k + \frac{ c_F }{\lambda} \left(a_1F_{\sigma_L}^{(2)}(T)-F_{\sigma_L}^{(1)}(T)\right).
\end{align*}   Hence, if $a_1>2$ and
$$c_R<\frac{(a_1-1)}{\lambda} \sum_{k=1}^{n}c_{f,k}, $$
then $\mathrm{Q}_0(N,T)$ is increasing in $N$. 

An economic constraint is introduced in the optimisation problem formulated in (\ref{optimisationproblem}) to limit the variable cost in a replacement cycle. The introduction of constraints in the search of the optimal maintenance strategy is not new in the literature. For example, \cite{Aven1} and \cite{Aven2} introduced constraints related to the system safety in an optimisation problem. In this paper, the constraint imposed in the optimisation is economic and it is related to the expected variable cost imposing that this expected variables cost cannot exceed a threshold K. 

Let $\Omega$ be the set of pairs $(N,T)$ such that $\mathrm{CV}(N,T)\leq K$, that is,
 \begin{equation}    \label{omega}
\Omega=\left\{(N,T): N=1, 2, \ldots, T >0 \, \, \mbox{subject to} \, \, \mathrm{CV}(N,T) \leq K \right\}, 
 \end{equation}   and the optimisation problem is formulated in terms of the economic constraint as
 \begin{equation}    \label{optimisationproblemconstraint}
\mathrm{Q}_0^*(T_{opt},N_{opt})=\inf \left\{\mathrm{Q}_0(N,T): (N,T) \in \Omega \right\}. 
 \end{equation}   To analyse the optimisation problem given by (\ref{optimisationproblemconstraint}), the monotonicity of the function $\mathrm{CV}(N,T)$ is studied. 

 \subsection{Economic constraint analysis}
We analyse the monotonicity of $\mathrm{CV}(N,T)$ in the two variables $N$ and $T$ and assume that $c_{k,y}=y f(y;\alpha_k(T),\beta)$ (i.e., variable cost proportional to the degradation level) and $a_2(T) > 1$ and $a_1(T)>1$ for all $T$. 
\begin{lemma} \label{lemma1} If $\alpha_k(T)$ is convex in $T$ for all $k$ with $\alpha_k(0)=0$ and $\alpha_k'(0)<\infty$, then 
\begin{itemize}
\item $\mathrm{CV}(N,T)$ is increasing in $T$ for fixed $N$, and  
\item $\mathrm{CV}(N,T)$ is increasing in $N$ for fixed $T$. 
\end{itemize}
\end{lemma}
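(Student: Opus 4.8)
The plan is to first reduce $\mathrm{CV}(N,T)$ to a closed form under the stated special case. With the variable cost proportional to the degradation level, $c_{k,y}=yc_k$, the inner integral in \eqref{variablecost} is simply the mean of a $\mathrm{Gamma}(\alpha_k(T),\beta_k a_2(T)^{j-1})$ variable, so that
\[
\mathrm{CV}(N,T)=\frac{1}{N\lambda}\sum_{j=1}^{N}\sum_{k=1}^{n}c_k\beta_k\,(a_1(T)a_2(T))^{j-1}\,\frac{\alpha_k(T)}{T}.
\]
Since all coefficients $c_k\beta_k/(N\lambda)$ are positive, both monotonicity statements reduce to controlling the building block $(a_1(T)a_2(T))^{j-1}\alpha_k(T)/T$ as a function of $T$, and the geometric factor as a function of $N$.

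For monotonicity in $T$ (fixed $N$) I would argue termwise. Each summand is a product of two non-negative factors, $(a_1(T)a_2(T))^{j-1}$ and $\alpha_k(T)/T$. By assumptions A9)--A10) together with the lemma hypotheses $a_1(T),a_2(T)>1$, the base $a_1(T)a_2(T)$ is $\ge 1$ and non-decreasing, hence $(a_1(T)a_2(T))^{j-1}$ is non-decreasing and positive for every $j\ge 1$. The key step is to show $\alpha_k(T)/T$ is non-decreasing: writing it as the secant slope $(\alpha_k(T)-\alpha_k(0))/(T-0)$ and using that $\alpha_k$ is convex with $\alpha_k(0)=0$, the slopes of chords from the origin are non-decreasing, so $\alpha_k(T)/T$ increases; equivalently, convexity gives the subgradient inequality $T\alpha_k'(T)\ge\alpha_k(T)$, whence $(\alpha_k(T)/T)'\ge 0$. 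The finiteness condition $\alpha_k'(0)<\infty$ guarantees that $\alpha_k(T)/T$ has a finite limit as $T\to 0^+$, so the expression is well defined near the origin. A product of non-negative non-decreasing functions is non-decreasing, so each summand, and therefore $\mathrm{CV}(N,T)$, is non-decreasing in $T$ (strictly, via the $j\ge 2$ terms, since $a_2(T)$ is strictly increasing).

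For monotonicity in $N$ (fixed $T$), set $r=a_1(T)a_2(T)>1$ and $C=\frac{1}{T\lambda}\sum_k c_k\beta_k\alpha_k(T)>0$, so that $\mathrm{CV}(N,T)=C\cdot\frac1N\sum_{j=0}^{N-1}r^{j}$ is $C$ times the Ces\`aro (running) average of the increasing sequence $r^{j}$. I would prove that this average increases in $N$ directly from the difference
\[
\frac{1}{N+1}\sum_{j=0}^{N}r^j-\frac{1}{N}\sum_{j=0}^{N-1}r^j=\frac{N r^{N}-\sum_{j=0}^{N-1}r^j}{N(N+1)}\ge 0,
\]
where the numerator is non-negative because $r^N$ dominates each of the $N$ terms $r^0,\dots,r^{N-1}$, and is strictly positive since $r>1$. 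Multiplying by $C>0$ gives the claim.

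The main obstacle is the $T$-monotonicity, precisely because $T$ enters $\mathrm{CV}$ through four competing channels ($1/T$, $a_1(T)$, $a_2(T)$ and $\alpha_k(T)$) that pull in opposite directions. The device that resolves this is the factorisation above, which isolates the only delicate ingredient, the claim that $\alpha_k(T)/T$ is non-decreasing, and that claim is exactly what the convexity of $\alpha_k$ with $\alpha_k(0)=0$ delivers. Everything else (positivity, the product-of-monotone-functions step, and the Ces\`aro-average argument for $N$) is routine.
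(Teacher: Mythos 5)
Your proposal is correct and follows essentially the same route as the paper: the same closed-form factorisation of $\mathrm{CV}(N,T)$ into $\frac{1}{N}\bigl(\sum_k c_k\beta_k\alpha_k(T)/(\lambda T)\bigr)\sum_{j=0}^{N-1}(a_1(T)a_2(T))^j$, the same key observation that convexity of $\alpha_k$ with $\alpha_k(0)=0$ makes $\alpha_k(T)/T$ non-decreasing, and the same Ces\`aro-average difference computation for monotonicity in $N$ (your numerator $Nr^N-\sum_{j=0}^{N-1}r^j$ is exactly the paper's $\sum_{j=0}^{N-1}\bigl(r^N-r^j\bigr)$). Your write-up is in fact somewhat more careful than the paper's, since you justify the chord-slope step explicitly rather than merely citing convexity.
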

\begin{proof} The expected variable cost rate is given by
\begin{align}   
\mathrm{CV}(N,T)&=\frac{1}{N} \left(\sum_{k=1}^{n} \frac{c_k\beta_k \alpha_k(T)/T}{\lambda}\right) \left( \sum_{j=0}^{N-1}(a_1(T)a_2(T))^j\right)\nonumber \\
&=\frac{1}{N} \frac{(a_1(T)a_2(T))^N-1}{a_1(T)a_2(T)-1}\left(\sum_{k=1}^{n} \frac{c_k\beta_k \alpha_k(T)/T}{\lambda}\right). 
\end{align}   The function $\alpha_k(T)/T$ is increasing in $T$ as consequence of the convexity of $\alpha_k(T)$ along with $\alpha_k(0)=0$ and $\alpha_k'(0)<\infty$. On the one hand, since $a_1(T)$ and $a_2(T)$ are increasing in $T$, $\mathrm{CV}(N,T)$ is increasing in $T$. On the other hand, the function
$$g(N)=\frac{1}{N}\sum_{j=1}^{N-1} (a_1(T)a_2(T))^j, $$
is increasing in $N$ since
$$g(N+1)-g(N) =\frac{\sum_{j=0}^{N-1}((a_1(T)a_2(T))^N-(a_1(T)a_2(T))^j)}{N(N+1)}, $$
and $a_1(T)a_2(T) > 1$, hence $\mathrm{CV}(N,T)$ is increasing in $N$.  This establishes Lemma 2.
\hfill{$\blacksquare$}
\end{proof}

The first consequence of Lemma \ref{lemma1} is that the condition
 \begin{equation}    \label{condicion1} \frac{1}{\lambda}
\sum_{k=1}^{n} c_k\beta_k \lim_{T \rightarrow 0} \frac{\alpha_k(T)}{T} \leq K, 
 \end{equation}   has to be imposed. If inequality \eqref{condicion1} is not fulfilled, then $\Omega=\emptyset$. 
On the other hand, if
 \begin{equation}   
\lim_{T \rightarrow \infty} \lim_{N \rightarrow \infty} \mathrm{CV}(N,T) \leq K, 
 \end{equation}   then $\Omega=\left\{{\color{black}(N,T):}  T > 0, \, N=1, 2, \ldots \right\}, $ and the optimisation problem in (\ref{optimisationproblemconstraint}) is reduced to the optimisation problem in (\ref{optimisationproblem}). 
Hence, to deal with the optimisation problem with constraints, we assume that the following inequality
 \begin{equation}    \label{inequalityconstraint}
\frac{1}{\lambda} \sum_{k=1}^{n} c_k\beta_k \lim_{T \rightarrow 0} \frac{\alpha_k(T)}{T} \leq K < \lim_{T \rightarrow \infty} \lim_{N \rightarrow \infty} \mathrm{CV}(N,T), 
 \end{equation}   is fulfilled. If (\ref{inequalityconstraint}) is fulfilled, we denote
$$N_1=\inf \left\{N: \lim_{T \rightarrow \infty} \mathrm{CV}(N,T) > K \right\}, $$
and 
$$N_2=\inf \left\{N: \lim_{T \rightarrow 0} \mathrm{CV}(N,T) > K \right\}. $$
We obtain that $N_1 \leq N_2$. 

If $N^*$ is fixed such that $N_1\leq N^{*} \leq N_2$, we denote $T_N^*$ as the root of the equation
$$ \mathrm{CV}(N^{*},T_N^*) = K, $$ 
and the set $\Omega$ given in (\ref{omega}) is therefore equal to
 \begin{equation}   
\Omega=\left\{(N,T): N=1, 2, \ldots, N_1-1 \right\} \cup \left\{(N,T): N=N_1, N_1+1, \ldots, N_2-1, \, T \leq T_N^* \right\}. 
 \end{equation}   

\section{Numerical examples}
We consider a system subject to three different defects, all of which start at random times, following a homogeneous Poisson process with rate $\lambda=1$ defects per unit time. The degradation process of the three defects is modelled using a nonhomogeneous gamma processes with shape parameters $\alpha_k(t)=\alpha_k t^{\xi_k}$ with $\xi_k=2$, $\alpha_1=1$, $\alpha_2=1$, $\alpha_3=1$ and scale parameters $\beta_1=1$, $\beta_2=2$ and $\beta_3=3$, respectively. The random effect $w_0$ is modelled with $w=w_0^{-1}$, where $w$ follows a gamma distribution $Gamma(1,2)$. 

The overall degradation process of the system $Y$ is a combination linear of the three processes
$$Y=0.2 X_1+0.7 X_2+0.4 X_3, $$
and we assume that the system fails when the degradation level of $Y$ exceeds the failure threshold $L=20$. 
Imperfect repairs are performed on the system every $T$ time units and the effect of these imperfect repairs is modelled by a geometric process with parameters $a_1(T)=1.1(1.2-0.2\exp(-T))$ for the time between arrivals and $a_2(T)=1.15(1.2-0.2\exp(-T))$ for the effect of the imperfect repairs on the degradation rate of the defects. Each inspection involves a cost of $c_I=0.05$ monetary units. Each repair involves a fixed cost of $c_{f,1}=2$ monetary units for the first defect, $c_{f,2}=2$ monetary units for the second defect and $c_{f,3}=2$ monetary units for the third defect. Each repair involves also a variable cost depending on the degradation of the defect. The variable cost is given by $c_{1,y}=7y$, $c_{2,y}=7 y$ and $c_{3,y}=7 y$ on the three defects, respectively, where $y$ denotes the degradation of the defect in the time of the repair. If the overall degradation of the system exceeds $L=20$ in the repair time, an additional cost of $c_F=100$ monetary units is incurred. A complete replacement of the system by a new one is performed at the time of the $N$-th imperfect repair with a cost of $c_R=1000$ monetary units. 

Figure \ref{newcost} shows the expected cost per unit time $\mathrm{Q}_0(N,T)$ versus $N$ and $T$. This graphic is obtained by simulation with 10 values for $T$ from 1 to 7, $N$ from 1 to 2 and 3000 simulations in each point. 

By inspection, the minimal value of $\mathrm{Q}_0(N,T)$ are obtained for $T_{opt}=1.9474$ and $N_{opt}=3$ with and optimal expected cost rate of $\mathrm{Q}_0(N_{opt},T_{opt})=332.6066$ monetary units per unit time. 
\begin{figure}
	\begin{center}
		\includegraphics[width=0.5\textwidth]{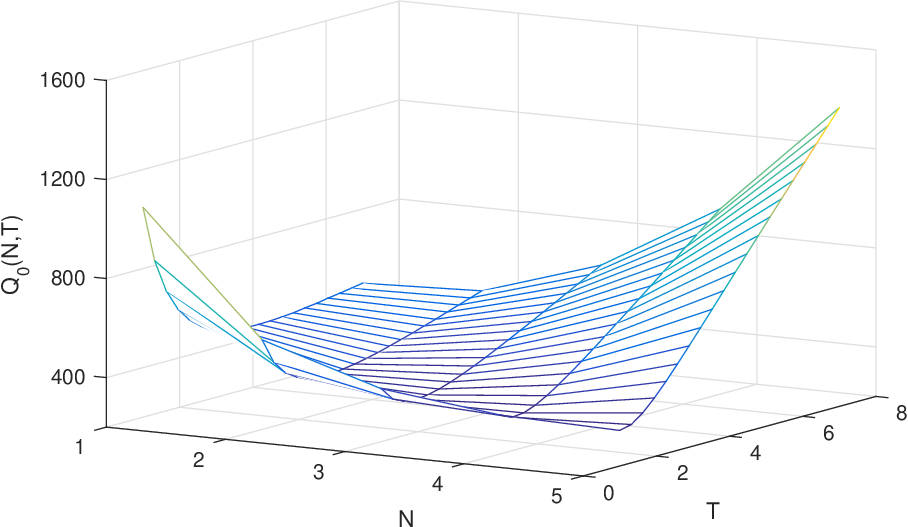}
		\caption{Expected cost $\mathrm{Q}_0(N,T)$ versus $N$ and $T$.} \label{newcost}
	\end{center}
\end{figure}
The economic safety constraint is introduced in this problem and it is dependent on the variable cost given by 
 \begin{equation}    \label{safetyconstraint1}
	\mathrm{CV}(N,T)=\frac{1}{\lambda N} \sum_{k=1}^{n}c_k\alpha_k \beta_k T \sum_{j=0}^{N-1} (a_1(T)a_2(T))^j. 
 \end{equation}   For fixed $N$, the function given by (\ref{safetyconstraint1}) is non-decreasing in $T$. For fixed $T$, we get that
$$\mathrm{CV}(N+1,T)-\mathrm{CV}(N,T)=\sum_{k=1}^{n}\frac{c_k\alpha_k\beta_k T}{\lambda}\sum_{j=0}^{N-1}\frac{\left((a_1(T)a_2(T))^N-a_1(T)a_2(T)^j\right)}{N(N+1)},$$
is positive. Figure \ref{newvariablecost} shows the economic safety constraint versus $T$ and $N$.  
\begin{figure}
	\begin{center}
		\includegraphics[width=0.5\textwidth]{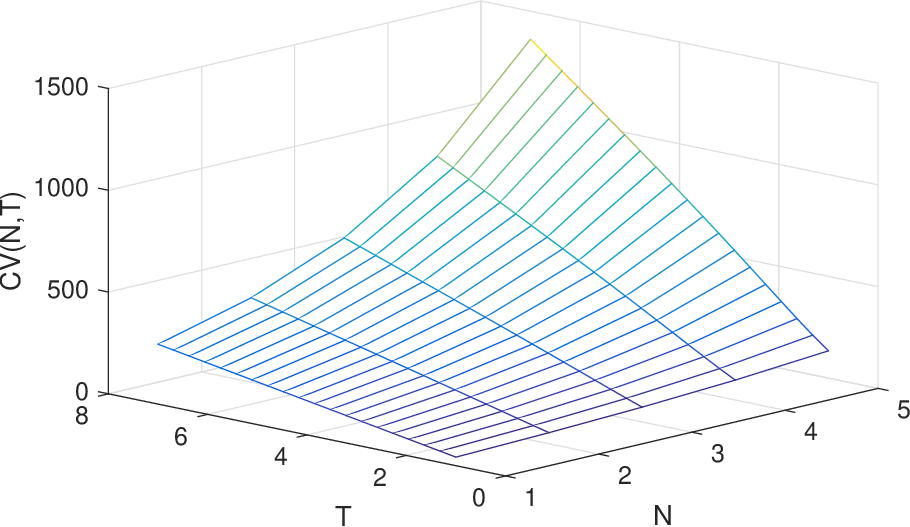}
		\caption{Variable cost $\mathrm{CV}(N,T)$ versus $N$ and $T$.} \label{newvariablecost}
	\end{center}
\end{figure}
As we visually can check, the variable cost is non-decreasing in $N$ for fixed $T$ and non-decreasing in $T$ for fixed $N$. 

We assume that the variable cost cannot exceed the threshold $K=130$ monetary units, that is, the optimization of $\mathrm{Q}_0(N,T)$ given by (\ref{ExpectedDegradationCostValue}) is performed on the set $\Omega_1$, where
$$\Omega_1=\left\{(N,T) \;\;  \mbox{such that} \;\; \, \mathrm{CV}(N,T) \leq 13
0\right\}.$$
Inequality (\ref{inequalityconstraint}) is fulfilled since
$$\lim_{T \rightarrow 0} \mathrm{CV}(1,T)=\lim_{T \rightarrow 0} \frac{1}{\lambda}\sum_{k=1}^{n} c_k\beta_k\alpha_k T^{\xi_k-1} =0, $$
and, therefore, $\lim_{T \rightarrow 0} \mathrm{CV}(1,T) \leq K$, and
$$\lim_{N \rightarrow \infty} \lim_{T \rightarrow \infty}  \mathrm{CV}(N,T) = \infty,  $$
hence inequality (\ref{inequalityconstraint}) is fulfilled. 

Figure \ref{varyN} shows the value of $\mathrm{CV}(N,T)$ for $N \leq 10$. 
\begin{figure}[h!]
	\begin{center}
		\includegraphics[width=0.5\textwidth]{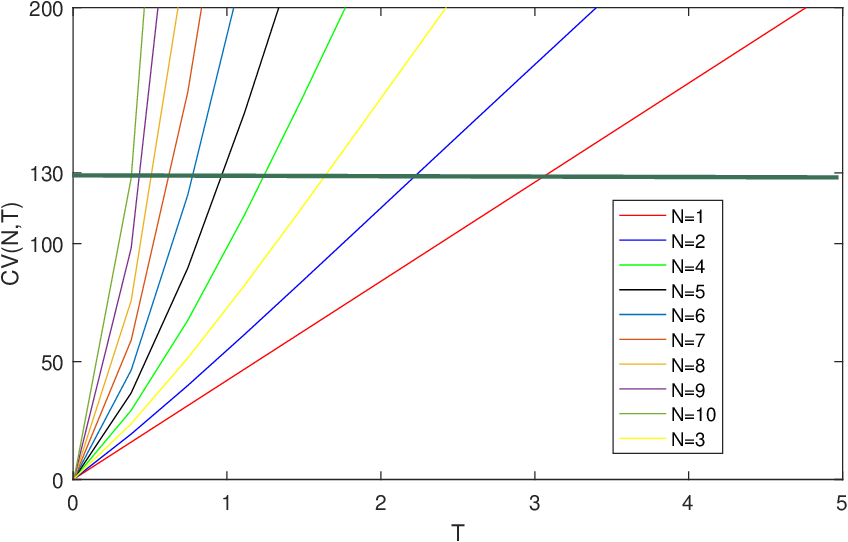}
		\caption{Variable cost $\mathrm{CV}(N,T)$ versus $T$.} \label{varyN}
	\end{center}
\end{figure}
The set of the points $\Omega$ that fulfils the economic constraint is given by
$$\Omega_1=\left\{(N,T), \, N\geq 1; \quad T \leq T_N^*\right\},  $$
where $T_N^*$ is the root of $\mathrm{CV}(N,T)=K$. 

The point in which the global minimum is obtained in the unconstrained problem (that is, $T_{opt}=1.9474$ and $N_{opt}=3$) presents a variable cost equals to $\mathrm{CV}(N_{opt}, T_{opt})=147.8725$ monetary units per unit time what it implies that it is not an optimal solution for the constrained problem. 

Figure \ref{varyN2} shows the values for the expected cost rate $\mathrm{Q}_0(N,T)$ for $T \leq T_N^*$, that is, 
 the expected cost rate $\mathrm{Q}_0(N,T)$ in the subset $\Omega$. The minimum of these function is reached at point $N_{opt}=4$ and $T_{opt}=1.1137$ with an expected cost rate equals to $\mathrm{Q}_0(T_{opt}, N_{opt})=344.4153$ monetary units per unit time. 
  
	\begin{figure}[h!]
	\begin{center}
		\includegraphics[width=0.5\textwidth]{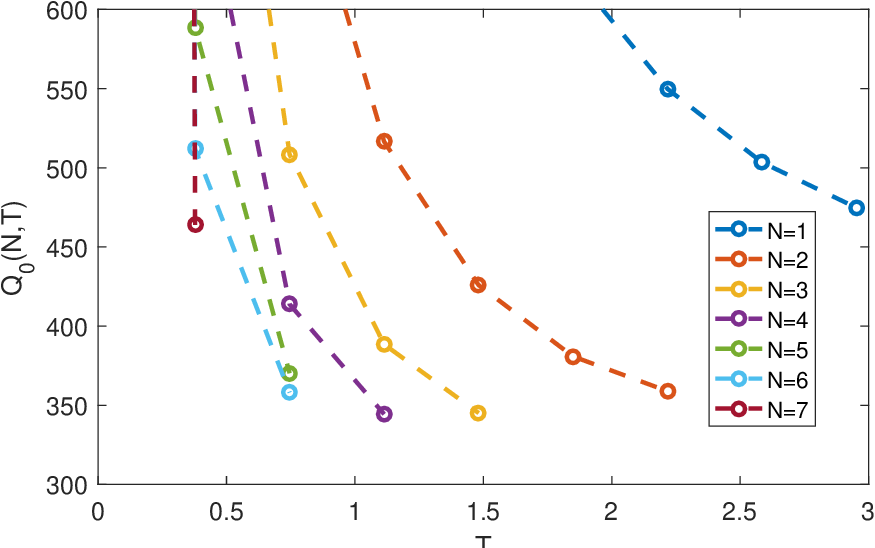}
		\caption{Expected cost rate $\mathrm{Q}_0(N,T)$ versus $T$ in $\Omega$.} \label{varyN2}
	\end{center}
\end{figure}


\section{Discussion}
This paper discussed the scenario where a linear combination of degradation processes was studied. {\color{black} Below we discuss the assumption of the degradation processes, the random environment, and the effectiveness of repair.} 

{\color{black} 

	\subsection{Degradation process.} The preceding sections assume that $X_k(t)$ follows the gamma process.  Certainly, one may choose the degradation process of $X_k(t)$ based on the real applications: for example, in the case of the example investigated in this paper, the propagation process of a fatigue crack evolves monotonically only in one direction, the gamma process is a good choice. Methodologically, however, $X_k(t)$ may be assumed to follow any other process, such as the Wiener process \cite{Sun}, the inverse Gaussian process \cite{Chen} and the Ornstein-Uhlenbeck process \cite{Deng}. The probability distribution of $\sum_{k=1}^n X_k(t)$ can be easily derived if $X_k(t) (k=1,2,...,n)$ follow  Wiener processes. In some case, a closed form of the distribution of $\sum_{k=1}^n X_k(t)$ may not be easily found and therefore numerical methods may be pursued. 
	
	One may also assume that $X_k(t)$ may follow different degradation processes, for example, on different $k$'s, some $X_k(t)$'s follow gamma processes and others follow Wiener processes. 
\subsection{Incorporation of dynamic environments.} The system considered in this paper is operated under a random environment. In addition to the method that incorporate the random environment with the random effect method, one may also use other methods, for example, one may consider the effect of the dynamic environment on the system as external shocks using Poisson processes \cite{Yang}, or as other stochastic processes, including continuous-time Markov chain process \cite{Bian}, and Semi-Markov process \cite{Khaorufeh}. The reader is referred to \cite{Peng} for a discussion in detail.
	\subsection{Imperfect repair.} In this paper, we consider the effectiveness of repair as imperfect. The justification is as follows. If we consider a pavement network, all defects, such as fatigue cracking and pavement deformation, disappear after repair. This does not suggest the pavement network is repaired as good as new (i.e., perfect repair) or as bad as old (i.e., minimal repair). Instead, it is more reasonable to assume that the repair is imperfect. In the literature, many methods that model the effectiveness of imperfect maintenance have been developed (see the Introduction section in \cite{Wu2}, for example). For simplicity, this paper uses the geometric process introduced in \cite{Lam}. Of course, one may other models such as the age-modification models \cite{Kijima},\cite{Doyen}, under which the optimisation process becomes much more complicated.
	\subsection{ Maintenance policy based on the cost process.} Since $U(t)$ forms a stochastic process, one may develop a maintenance policy based on the cost process. That is, once the cost process reaches a threshold, maintenance on the combined degradation process $Y(t)$ is carried out. Hence, intriguing questions may include optimisation of maintenance intervals, for example.
}
\subsection{Rethinking of the assumptions}
The above sections assumes the defect inter-occurrence times to be {\it exchangeable} and to exhibit the {\it lack of memory} property. Nevertheless, both properties may be violated in the real world. If so, one may assume that the defect inter-occurrence times follow a non-homogeneous Poisson process, for example.

\subsection{A $r$-out-of-$n$ case}
\label{r_out_n_Combination}
In Section \ref{Linear_Combination}, we discussed the case when the sum of the deterioration levels is monitored. In practice, another scenario may be to monitor $r$-out-of-$n$ deterioration processes. That is, if $k$-out-of-$n$ deterioration levels are greater than their pre-specified thresholds, respectively, maintenance needs performing. Denote $Y_{(1)}(t), Y_{(3)}(t), ...,Y_{(n)}(t)$ as by sorting the values (realisations) of $Y_1(t), Y_2(t),..., Y_n(t)$ in increasing order. For simplicity, we assume that $Y_k(t)$ are i.i.d for $k=1,2,...,n$ with cdf $F(x,\alpha(t),b^{-1}\beta)$. The cumulative distribution function of $Y_{(r)}(t)$ is given by
 \begin{equation}   
G_{Y_{(r)}(t)}(y)=1-\sum_{k=r}^n\frac{n!}{(n-r)!r!} (1-F(y,\alpha(t),b^{-1}\beta))^k (F(y,\alpha(t),b^{-1}\beta))^{n-k}. \end{equation}   

{\bf First hitting time $T_{L_2}$}. 
Let $T_{L_2}= \textrm{inf}(t > 0: Y_{(r)}(t) \ge {L_2})$. Then the distribution of the first passage time $T_{L_2}$ is given by
\begin{align}    \nonumber
F_{T_{L_2}}(t) &= P(T_{L_2} < t) \\ \nonumber
&= P(Y_{(r)}(t) \ge L_2) \\ \nonumber
&= \sum_{k=r}^n\frac{n!}{(n-r)!r!} (1-F(L_2,\alpha(t),b^{-1}\beta))^k (F(L_2,\alpha(t),b^{-1}\beta))^{n-k},
\end{align}   where $b_k \geq 0$ for all $k$.
\section{Conclusions}
This paper investigated the scenario where a system needs maintenance if a linear combination of the degradation processes exceeds a pre-specified threshold. It derived the probability distribution of the first hitting time and the process of repair cost. The paper then considered the degradation processes that are affected by random effect and covariates. Imperfect repair is conducted when the combined process exceeds a pre-specified threshold, where the imperfect repair is modelled with a geometric process. The system is replaced once the number of its repair reaches a given number. Numerical examples were given to illustrate the maintenance policies derived in the paper.

As our future work, we may investigate the case that a system needs maintenance if $k$ out of $n$ degradation processes exceeds a pre-specified threshold. 


\end{document}